\def\argmin{\text{argmin}}
\newcommand{\beh}[1]{{\color{black} #1}}
\DeclareMathOperator{\trace}{Trace}
\DeclareMathOperator{\diag}{diag}
\newcommand{\x}{\boldsymbol{x}}
\newcommand{\e}{\mathbf{e}}
\newcommand{\bu}{\boldsymbol{u}}
\newcommand{\bv}{\boldsymbol{v}}
\newcommand{\allzero}{\boldsymbol{0}}
\newcommand{\allone}{\mathbf{1}}
\newcommand{\R}{\mathbb{R}}
\newcommand{\G}{\mathcal{G}}
\newcommand{\Edge}{\mathcal{E}}
\newcommand{\A}{\mathcal{A}}
\newcommand{\Oc}{\mathcal{O}}
\newcommand{\range}{\mathcal{R}}
\newcommand{\bseta}{\boldsymbol{\eta}}
\newcommand{\s}{\boldsymbol{s}}
\newtheorem{theorem}{Theorem}
\newtheorem{definition}{Definition}
\newtheorem{proposition}[theorem]{Proposition}
\newtheorem{remark}{Remark}
\newcommand{\ucenter}{{U-centrality}\xspace}
\newcommand{\xin}[1]{{\color{black} #1}}
\newcommand{\Xin}[1]{{\color{black} #1}}
\begin{document}
\title{\LARGE \bf \ucenter: A Network Centrality Measure Based on Minimum Energy Control for Laplacian Dynamics}

\author{Xinran Zheng, Leonardo Massai, Massimo Franceschetti, and Behrouz Touri
\thanks{This research was partially  supported by AFOSR FA9550-24-1-0129 and AFOSR FA9550-23-1-0057 grants.}
\thanks{X. Zheng and B. Touri are with the Department of Industrial and Enterprise Systems Engineering, University of Illinois Urbana-Champaign.
        {\tt\small \{xinran22, touri1\}@illinois.edu}}%
\thanks{L. Massai is with the Automatic Control Laboratory and the NCCR Automation, EPFL.
        {\tt\small l.massai@epfl.ch}}%
\thanks{M. Franceschetti is with the Electrical and Computer Engineering Department, University of California San Diego.
        {\tt\small mfranceschetti@ucsd.edu}}%
}

\maketitle

\begin{abstract}
Network centrality is a foundational concept for quantifying the importance of nodes within a network. Many traditional centrality measures—such as degree and betweenness centrality—are purely structural and often overlook the dynamics that unfold across the network. However, the notion of a node’s importance is inherently context-dependent and must reflect both the system’s dynamics and the specific objectives guiding its operation. Motivated by this perspective, we propose a dynamic, task-aware centrality framework rooted in optimal control theory. \Xin{By formulating a problem on minimum energy control of average opinion based on Laplacian dynamics and focusing on the variance of terminal state, we introduce a novel centrality measure—termed \ucenter—that quantifies a node’s ability to unify the agents' state.} We demonstrate that \ucenter interpolates between known measures: it aligns with degree centrality in the short-time horizon and converges to a new centrality over longer time scales which is closely related to current-flow closeness centrality. This work bridges structural and dynamical approaches to centrality, offering a principled, versatile tool for network analysis in dynamic environments.
\end{abstract}


\thispagestyle{empty}
\pagestyle{empty}

\section{Introduction}

Node centrality and ranking are key concepts in complex network analysis that measure and rank the importance of nodes in a network. In terms of practical applications, they are widely used to identify key infrastructural nodes in complex engineering networks (such as the Internet \cite{iyer2013attack} and electrical networks \cite{hines2008centrality}), brain networks \cite{fraschini2014eeg,joyce2010new}, economic networks \cite{ballester2006s,banerjee2013diffusion}, and epidemic networks \cite{rodrigues2019network}.

Several different centrality measures have been proposed, such as degree centrality, closeness centrality, betweenness centrality, eigenvector centrality, and PageRank centrality \cite{newman2018networks}. It is noticeable that all of these centrality measures are ``intrinsic'',  namely their definitions depend only on the network structure. However, since the importance of nodes depends on context and application, there is not a unique definition of network centrality, and there is no objective viewpoint on how one should choose between the plethora of available centrality notions in any given application setting. In an effort to achieve such a viewpoint, we argue that centrality can be better understood in the context of a dynamical process over a network and of an objective function relating to controllability of the dynamics. Accordingly, we propose a novel approach to network centrality based on a dynamical system perspective and tie centrality measures to network dynamics and control objectives.

\xin{

More specifically, we consider the following problem: given a dynamic model over a network, determine which node is the best one to control in order to reach a certain control objective, or in other words, controlling which node would achieve the best system objective. Controllability of complex networks and multi-agent systems concerning the underlying graph has been extensively studied in \cite{chapman2014controllability,o2016conjecture,tanner2004controllability,rahmani2009controllability,aguilar2014graph,pasqualetti2014controllability,zhao2019networks}, to name just a few. Among them, \cite{chapman2014controllability} and \cite{o2016conjecture} study the controllability of composite networks and random networks, respectively, while \cite{tanner2004controllability,rahmani2009controllability,aguilar2014graph} study the controllability of Laplacian leader-follower dynamics. Rather than viewing controllability as a possibility concept, the authors in \cite{pasqualetti2014controllability} introduce a controllability metric, where they quantify the difficulty of controlling a dynamical system by the minimum control energy required to transfer the state from the origin to the worst point on the unit sphere. They consider general linear time-invariant (LTI) dynamics, and provide a bound on the energy in terms of the number of controlled nodes and some heuristic rules on how to select the controlled nodes. 

Motivated by the minimum energy control problem in \cite{pasqualetti2014controllability} and the Laplacian leader-follower dynamics in \cite{tanner2004controllability,rahmani2009controllability,aguilar2014graph}, we study a Laplacian dynamics problem where we aim to find the minimum energy to steer the state from $\allzero$ to a state with aggregate state exceeding a certain threshold. Interestingly, we find that by controlling any one of the nodes, the minimum energy required remains the same. Given that, in our setting, we define the best node as the node that can be controlled with minimum energy to achieve the minimum distance between the terminal state and the consensus state. \beh{In other words, the central node is the leader who is the most capable of unifying the agents while the aggregate state passes a certain threshold. As a result, we refer to it as \ucenter.} In addition, 
we derive the exact expression (instead of a bound) of the energy and the distance of interest for each node based on the topology of the graph, and regard the distance as a measure of node centrality.
}

\noindent\textbf{Contributions}.
First, we formulate a minimum energy control problem for Laplacian dynamics over a network, where the objective is to steer the state vector from the origin to a subspace where the aggregate state of agents is greater than a given value within time $t_f$. Next, we define a new centrality measure ``\ucenter'' representing the $l_2$ distance between the consensus state and the terminal state given by the minimum energy control solution when only the respective node is controlled. 
Then, we show that for short-term and long-term influence, \ucenter would be closely related to existing centrality measures in network science. More specifically, when $t_f\approx0$, \ucenter coincides with degree centrality. When $t_f\gg0$, \ucenter provides a new centrality measure that is closely related to current-flow closeness centrality~\cite{bozzo2013resistance}. Through the study of \ucenter for $t_f\gg 0$ in trees, we show how it is informative of the position of the nodes in the graph, particularly in relation to node peripherality in such graphs.

\paragraph*{Notations}
We denote the set of real numbers by $\R$ and the vector space of $n$-dimensional real-valued column vectors by $\R^n$. We use bold lower-case letters to denote column vectors. We use $I$ to denote the identity matrix of a known underlying dimension $n$. We use $\allone$, $\allzero$, and $\e_i$ to denote the all-one vector, the all-zero vector, and the vector with 1 in the $i$th coordinate and 0's elsewhere in $\R^n$, respectively. We use $[n]$ to denote $\{1, \ldots, n\}$. We use $\| \cdot \|_2$ to denote the standard Euclidean norm. 
We use $\range(A)$ to denote the column range of matrix $A$.
For an undirected graph $\G=([n],\Edge)$ with $n$ vertices and a set of edges $\Edge$, define $\A$ to be the adjacency matrix of $\G$ with $\A_{ij}=1$ if $(i,j)\in\Edge$ and $\A_{ij}=0$, otherwise.
\xin{We say $h(t)=\Oc(g(t))$ if there exists a positive real number $M$ and a real number $t_0$ such that ${|h(t)|\leq M|g(t)|}$ for all $t\geq t_0$.}

\section{Problem Formulation}
In this section, we formulate an optimal control problem and define a new centrality measure in this setting. First, let us introduce some preliminary concepts on minimum energy control and Laplacian dynamics.

\subsection{Preliminary on Minimum Energy Control}
Consider a dynamical model constituted of a network of $n$ agents whose interactions can be modeled by a linear time-invariant (LTI) dynamics and a minimum energy control problem
\begin{align}
\label{eqn:min_energy_control}
    \min_{\bu(t)} &\int_{0}^{t_f} \|\bu(t)\|_2^2 \,dt \\
    \text{Subject to: } &\dot{\x}(t)=A\x(t)+B\bu(t) \cr
    &\x(0)=\allzero \cr
    &\x(t_f)\in U, \nonumber
\end{align}
where $\x(t)=\begin{bmatrix} x_1(t)&\cdots&x_n(t) \end{bmatrix}^\intercal\in\R^n$ is the time-varying state vector, and $\bu(t)\in\R^p$ is the time-varying external control input. 
$A\in\R^{n\times n}$ models the influence between the agents and conforms with an underlying graph $\G=([n],\Edge)$, i.e., $A_{ij}>0$ iff $(i,j)\in \Edge$ for $i\not=j$. 
In this setting, \xin{$B\in\R^{n\times p}$ determines the nodes that are influenced by the control inputs. More specifically, if a subset of nodes $\{k_1,\ldots,k_p\}\subseteq[n]$ are controlled, then we assume that we have $p$ independent control inputs that are injected through these $p$ nodes, i.e., $B=\begin{bmatrix} \e_{k_1}&\cdots&\e_{k_p} \end{bmatrix}$ and we are interested in the minimum energy control of such a network through these  nodes. }
In the above problem, $U\subseteq \R^n$ is the terminal set of interest that has a non-empty intersection with the reachable subspace of this LTI system.

For $U=\{\x_f\}$, where $\x_f$ is a reachable state, the solution to \eqref{eqn:min_energy_control} is given by (see Theorem 11.4 in \cite{hespanha2018linear})
\begin{align*}
    \bu^*(t)=B^\intercal e^{A^\intercal(t_f-t)}\bseta_f,
\end{align*}
where $\bseta_f\in\R^n$ can be any vector with $\x_f=W_R\bseta_f$, and $W_R$ is the reachability Gramian
\begin{align}\label{eqn:WR}
    W_R=\int_{0}^{t_f} e^{A\tau}BB^\intercal e^{A^\intercal\tau} \,d\tau.
\end{align}
In this case, the corresponding minimum energy is
\begin{align}
\label{eqn:min_energy_general}
    E=\bseta_f^\intercal W_R \bseta_f.
\end{align}


\subsection{Preliminary on Laplacian Dynamics}

Given an undirected graph $\G$ with adjacency matrix $\A$, we define the Laplacian matrix of graph $\G$ to be ${L=\diag(\boldsymbol{d})-\A}$, where the degree vector $\boldsymbol{d}=\A\allone$ and $\diag(\boldsymbol{d})$ is the diagonal matrix with $i$th diagonal element equal to $d_i$. Then, if we let $A=-L$ and $B=0$ in \eqref{eqn:min_energy_control}, we get the Laplacian dynamics $\dot{\x}=-L\x$, which is the continuous version of the French-Degroot opinion dynamics model~\cite{french1956formal,degroot1974reaching}. It is easy to verify that for the dynamics $\dot{\x}=-L\x$, we have $\allone^\intercal\x(t)=\allone^\intercal\x(0)$ for all $t$, i.e., the average opinion of all agents in the network does not change. In addition, we have $\lim_{t\rightarrow\infty}\x(t)=\frac{\allone^\intercal\x(0)}{n}\allone$, i.e., the opinion dynamics converges to the consensus state.

\subsection{Minimum Energy Control of Average Opinion}
For the Laplacian dynamics, consider the optimal control problem~\eqref{eqn:min_energy_control} where the goal is to drive the state from $\allzero$ to \beh{the terminal set $U=\{\allone^\intercal \x\geq c\}$ for some $c>0$, where the aggregate state surpasses a certain threshold $c$. In other words, consider} 
\begin{align}
\label{eqn:ave_state_control}
    \min_{\bu(t)} &\int_{0}^{t_f} \|\bu(t)\|_2^2 \,dt \\
    \text{Subject to: } &\dot{\x}(t)=-L\x(t)+B\bu(t) \cr
    &\x(0)=\allzero \cr
    &\allone^\intercal\x(t_f):=\allone^\intercal\x_f\geq c. \nonumber
\end{align}
We state the solution to \eqref{eqn:ave_state_control} in the following proposition. \begin{proposition}
    The minimum control energy of \eqref{eqn:ave_state_control} is given by
    \begin{align*}
        E=\frac{c^2}{\allone^\intercal W_R\allone},
    \end{align*}
    and the corresponding terminal state is $ \x_f=\frac{c}{\allone^\intercal W_R\allone}W_R\allone$,
    where $W_R$ is the the reachability Gramian given in~\eqref{eqn:WR}.  
\end{proposition}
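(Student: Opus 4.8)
The plan is to recognize that the half-space terminal set reduces to a single equality constraint, after which the problem becomes a constrained quadratic minimization solvable in closed form. First I would show that the constraint is active at optimality, i.e. $\allone^\intercal\x_f=c$. The origin is infeasible because $\allone^\intercal\allzero=0<c$, and for any feasible reachable $\x_f$ with $\allone^\intercal\x_f>c$ the scaled state $s\x_f$ with $s=c/(\allone^\intercal\x_f)\in(0,1)$ is still reachable, lies on the boundary hyperplane, and carries energy $s^2$ times that of $\x_f$, hence strictly smaller. So it suffices to minimize over $\{\x_f:\allone^\intercal\x_f=c\}$.

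Next I would invoke the single-target result quoted above: every reachable terminal state is $\x_f=W_R\bseta_f$, and the minimum energy to reach it is $\bseta_f^\intercal W_R\bseta_f$. This value does not depend on the chosen representative $\bseta_f$, since two representatives differ by an element of $\ker W_R$, which contributes nothing to the quadratic form. Substituting, the problem becomes
\begin{align*}
\min_{\bseta_f}\ \bseta_f^\intercal W_R\bseta_f \quad\text{subject to}\quad \allone^\intercal W_R\bseta_f=c,
\end{align*}
a convex program with a single linear equality constraint.

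I would then solve this by Lagrange multipliers. Forming $\mathcal{L}=\bseta_f^\intercal W_R\bseta_f-\lambda(\allone^\intercal W_R\bseta_f-c)$ and setting the gradient to zero gives $2W_R\bseta_f=\lambda W_R\allone$, hence $\x_f=W_R\bseta_f=\tfrac{\lambda}{2}W_R\allone$. Imposing $\allone^\intercal\x_f=c$ fixes $\tfrac{\lambda}{2}=c/(\allone^\intercal W_R\allone)$, so $\x_f=\frac{c}{\allone^\intercal W_R\allone}W_R\allone$, and evaluating the objective at the admissible representative $\bseta_f=\frac{c}{\allone^\intercal W_R\allone}\allone$ yields $E=c^2/(\allone^\intercal W_R\allone)$, as claimed. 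Equivalently, the matching lower bound $E\geq c^2/(\allone^\intercal W_R\allone)$ follows in one line from the generalized Cauchy--Schwarz inequality $(\allone^\intercal\x_f)^2\leq(\allone^\intercal W_R\allone)(\x_f^\intercal W_R^+\x_f)$ applied with $\x_f\in\range(W_R)$, with equality exactly when $\x_f\propto W_R\allone$.

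The step requiring the most care is the well-definedness of the reduction: $W_R$ is only positive semidefinite, so I must work on the reachable subspace $\range(W_R)$ (via the pseudo-inverse, or equivalently the $\bseta_f$-parametrization) rather than assuming invertibility, and I must ensure feasibility, namely $\allone^\intercal W_R\allone>0$, which is precisely the assumption that the half-space meets the reachable subspace. For the Laplacian dynamics at hand this is automatic whenever $t_f>0$ and at least one node is actuated, since $A^\intercal\allone=-L\allone=\allzero$ forces $\allone^\intercal W_R\allone=t_f\,\|B^\intercal\allone\|_2^2>0$.
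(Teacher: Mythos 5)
Your proof is correct and follows essentially the same route as the paper: parametrize reachable terminal states as $\x_f=W_R\bseta_f$, reduce the problem to the convex quadratic program $\min_{\bseta_f}\bseta_f^\intercal W_R\bseta_f$ under a linear constraint, and solve via Lagrange multipliers to obtain $\x_f=\frac{c}{\allone^\intercal W_R\allone}W_R\allone$ and $E=\frac{c^2}{\allone^\intercal W_R\allone}$. The differences are minor variations rather than a different method: where the paper keeps the inequality constraint and rules out $\mu=0$ through the KKT complementary-slackness conditions, you establish activeness of the constraint first by a scaling argument and then solve the equality-constrained problem; your Cauchy--Schwarz lower bound and your explicit checks of well-definedness and of feasibility via $\allone^\intercal W_R\allone=t_f\|B^\intercal\allone\|_2^2>0$ (a computation the paper performs separately, just after the proposition) are correct optional reinforcements.
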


\begin{proof}
    Since the minimum energy required to steer the state form $\allzero$ to $\x_f\in \range(W_R)$ is given by \eqref{eqn:min_energy_general}, solving \eqref{eqn:ave_state_control} is equivalent to solving 
    \begin{align}
    \label{eqn:ave_state_control_uc}
        \min_{\bseta_f}\quad &\bseta_f^\intercal W_R\bseta_f \\
        \text{Subject to: } &\allone^\intercal W_R\bseta_f\geq c.\nonumber
    \end{align}
    \beh{Note that the above problem is a convex (quadratic) problem as the controllability Gramian $W_R$ in~\eqref{eqn:WR} is positive semidefinite and we have a linear  constraint set. }
    To solve \eqref{eqn:ave_state_control_uc}, consider the Lagrangian of this quadratic program 
    \begin{align*}
        \mathcal{L}(\bseta_f,\mu)=\bseta_f^\intercal W_R\bseta_f+\mu(-\allone^\intercal W_R\bseta_f+c).
    \end{align*}
    Since the Slater condition holds for~\eqref{eqn:ave_state_control_uc},  $\bseta_f,\mu$ are optimal primal and dual variables, iff they satisfy the KKT conditions (see Chapter 5 in \cite{boyd2004convex}) 
        \begin{subequations}   
            \begin{align}\label{eqn:KKT_uc1}
                    2W_R\bseta_f-\mu W_R\allone&=0,\\\label{eqn:KKT_uc2}
                    \allone^\intercal W_R\bseta_f&\geq c,\\\label{eqn:KKT_uc3}
                    \mu&\geq0,\\\label{eqn:KKT_uc4}
                    \mu(-\allone^\intercal W_R\bseta_f+c)&=0.
                \end{align}
        \end{subequations}
    \beh{Note that to satisfy~\eqref{eqn:KKT_uc4}, either $\mu=0$ or $\allone^\intercal W_R\bseta_f=c$. But we cannot have $\mu=0$ as with that, ~\eqref{eqn:KKT_uc1} would imply that $\allone^\intercal W_R\bseta_f=0$, which contradicts~\eqref{eqn:KKT_uc2}. Therefore, $\mu\not=0$ and hence, $\allone^\intercal W_R\bseta_f=c$. Using this in~\eqref{eqn:KKT_uc1} we obtain the unique solution for the dual variable $\mu=\frac{2c}{\allone^\intercal W_R\allone}$. Finally, using this in~\eqref{eqn:KKT_uc1}, we arrive at the unique terminal state and the minimum energy for our optimal control problem}
    \begin{align*}
        \x_f&=W_R\bseta_f=\frac{c}{\allone^\intercal W_R\allone}W_R\allone,\\
        E&=\bseta_f^\intercal W_R\bseta_f=\frac{c^2}{\allone^\intercal W_R\allone}.
    \end{align*}
\end{proof}

Note that for such a Laplacian dynamics, we have $L^\intercal\allone=L\allone=\allzero$. 
Hence, $\allone^\intercal e^{-L\tau}=\allone^\intercal$ and $e^{-L\tau}\allone=\allone$. Therefore, we have
\begin{align*}
    \allone^\intercal W_R\allone=\int_{0}^{t_f} \allone^\intercal BB^\intercal\allone \,d\tau
    =t_f \allone^\intercal BB^\intercal\allone,
\end{align*}
and
    $W_R\allone=\int_{0}^{t_f} e^{-L\tau}BB^\intercal \allone \,d\tau$.
Next, let us look at the specific expressions for $\x_f$ and $E$ for two cases: when all nodes are controlled \xin{(i.e., $B=I$)} and when only a single node $i$ is controlled \xin{(i.e., $B=\e_i$)}, respectively.

\subsubsection{Control all nodes}
When all nodes are controlled, we have $B=I$. Thus, the optimal terminal point is
\begin{align*}
    \x_{f_{[n]}}=\frac{c}{nt_f}t_f\allone=\frac{c}{n}\allone,
\end{align*}
and the minimum energy is $E_{[n]}=\frac{c^2}{nt_f}.$

\subsubsection{Control a single node $i$}
When only a single node ${i\in[n]}$ is controlled, we have $B=\e_i$. Thus, the optimal terminal point is
\begin{align}
\label{eqn:xf_i}
    \x_{fi}=\frac{c}{t_f}\int_{0}^{t_f} e^{-L\tau}\e_i \,d\tau,
\end{align}
and the minimum energy is $E_i=\frac{c^2}{t_f}$.
\beh{This is rather a very interesting and surprising quantity: it shows that, when controlling a single node, the minimum energy $E_i$ required to land agents' states through such a Laplacian dynamics to the subspace $U=\{\allone^\intercal \x\geq c\}$ does not depend on the underlying graph, and the node in the network!}

Observe that when all nodes are controlled, the minimum energy control input steers the state to the consensus state $\frac{c}{n}\allone$. When only node $i$ is controlled, although the minimum energy $E_i$ remains constant for all $i$, the terminal points $\x_{fi}$ differ for each $i$. Therefore, \beh{it is natural to  consider a node that leads to a better homogeneity, a more \textit{central} node. More precisely, we can consider the distance $\|\x_{fi}-\frac{c}{n}\allone\|_2$ as a centrality measure.} In other words, given an opinion dynamics model over the network, considering the minimum energy control to drive the state from $\allzero$ to achieve an average state $\frac{c}{n}$ within time $t_f$, we rank the agents on the network by the distance between the respective terminal states and $\frac{c}{n}\allone$ (that is the homogeneous state derived by controlling all agents). \beh{Essentially, this centrality measure characterizes the  node's ability to unify the agents state.}

\begin{definition}
\label{def:Vc}
    We define \beh{the \ucenter} measure of a network $\G$ 
    to be the vector $V_c=\left(\|\x_{f1}-\frac{c}{n}\allone\|_2,\ldots, \|\x_{fn}-\frac{c}{n}\allone\|_2\right)$, where $\x_{fi}$ is the terminal point when only node $i$ is controlled \beh{i.e., $B=e_i$,} as defined in \eqref{eqn:xf_i}. We say that node $i$ is the $U$-central node if ${i\in \argmin_{j\in[n]}\|\x_{fj}-\frac{c}{n}\allone\|_2}$.
\end{definition}

Clearly, the \ucenter measure we define above depends on (i) the topology of the graph given by the adjacency matrix $\A$ or the Laplacian matrix $L$, and (ii) the termination time $t_f$. \beh{Interestingly, we show that for a fixed network, for the extreme cases where $t_f\approx0$ and $t_f\gg0$, \ucenter is closely related to existing centrality measures in network science}.


\section{The Main Results}
In this section, we present our two main results \xin{about how \ucenter relates to intrinsic centrality measures in two extreme control time horizons} and provide their proofs.
\begin{theorem}
   {For a connected undirected graph} and for short influence-time $t_f\approx0$, the \ucenter measure, as defined in Definition \ref{def:Vc}, coincides with the degree centrality. In other words, for sufficiently small $t_f$, a central node with respect to \ucenter is a node with the highest degree.
\end{theorem}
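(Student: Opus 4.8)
The plan is to Taylor-expand the matrix exponential appearing in the terminal state and read off the lowest-order term in $t_f$ that actually separates the nodes. Recall from~\eqref{eqn:xf_i} that $\x_{fi}=\frac{c}{t_f}\int_0^{t_f}e^{-L\tau}\e_i\,d\tau$. Using $e^{-L\tau}=I-L\tau+\tfrac{1}{2}L^2\tau^2-\cdots$ and integrating term by term, I would obtain
\[
\x_{fi}=c\,\e_i-\frac{c\,t_f}{2}L\e_i+\Oc(t_f^2),
\]
so that, writing $\bv_i:=c\big(\e_i-\tfrac{1}{n}\allone\big)$ for the node-dependent leading vector,
\[
\x_{fi}-\frac{c}{n}\allone=\bv_i-\frac{c\,t_f}{2}L\e_i+\Oc(t_f^2).
\]

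Next I would expand the squared distance $\big\|\x_{fi}-\frac{c}{n}\allone\big\|_2^2$ and collect powers of $t_f$. The zeroth-order term is $\|\bv_i\|_2^2=c^2\big(1-\tfrac{1}{n}\big)$ (a direct computation on $\e_i-\tfrac1n\allone$), which is \emph{independent of} $i$; hence the leading term carries no information about which node is central, and the ranking is decided by the first-order coefficient $-c^2\big(\e_i-\tfrac1n\allone\big)^\intercal L\e_i$. Here I would invoke the two defining properties of the Laplacian already used in the excerpt: $\allone^\intercal L=\allzero^\intercal$, which kills the $\tfrac1n\allone$ contribution, and $\e_i^\intercal L\e_i=L_{ii}=d_i$. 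This yields the clean expansion
\[
\Big\|\x_{fi}-\frac{c}{n}\allone\Big\|_2^2=c^2\Big(1-\frac{1}{n}\Big)-c^2 d_i\,t_f+\Oc(t_f^2).
\]

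Finally I would translate this into the ranking claim. Since the degree enters with a negative sign, a larger degree yields a smaller terminal distance; comparing nodes $i$ and $j$ gives $\big\|\x_{fi}-\tfrac{c}{n}\allone\big\|_2^2-\big\|\x_{fj}-\tfrac{c}{n}\allone\big\|_2^2=-c^2(d_i-d_j)\,t_f+\Oc(t_f^2)$, which is strictly negative whenever $d_i>d_j$ and $t_f$ is small enough. Consequently the minimizer of the \ucenter distance must lie among the maximum-degree nodes, i.e.\ the $U$-central node is a highest-degree node. I expect the only genuine subtlety to be the asymptotic bookkeeping: one must ensure that below a single common threshold the first-order term dominates the remainder \emph{simultaneously} for all finitely many pairs of distinct-degree nodes, which is a uniform $\Oc(\cdot)$ bound over the finite vertex set; ties in degree are harmless, since they can only permute nodes that are already of maximum degree. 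Note that connectivity of $\G$ is not actually needed for this short-horizon computation — it is simply the standing assumption under which the measure is defined.
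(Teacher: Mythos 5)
Your proof is correct, and it takes a genuinely different route from the paper's. The paper proves the theorem by diagonalizing $L$ (this is where connectivity enters, to isolate the simple zero eigenvalue with eigenvector $\frac{1}{\sqrt{n}}\allone$), expressing $\x_{fi}-\frac{c}{n}\allone$ through the spectral sum in \eqref{eqn:dist_xfi_con}, and then applying L'H\^opital's rule repeatedly: first to obtain the node-independent limit $c\left(\e_i-\frac{1}{n}\allone\right)$, then to identify the correction term $-\frac{c}{2}L\e_i$, and finally once more on an explicit entrywise formula for the norm to conclude $\lim_{t_f\to 0}\frac{1}{t_f}\left(\left\|\x_{fi}-\frac{c}{n}\allone\right\|_2-c\sqrt{\frac{n-1}{n}}\right)=-\frac{c}{2}\sqrt{\frac{n}{n-1}}\,d_i$. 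You reach the same first-order information --- $\x_{fi}=c\,\e_i-\frac{c\,t_f}{2}L\e_i+\Oc(t_f^2)$ --- by termwise integration of the Taylor series of $e^{-L\tau}$ instead of the spectral decomposition, and you then sidestep all the square-root asymptotics by comparing \emph{squared} norms, where the degree emerges through the algebraic identities $\allone^\intercal L=\allzero^\intercal$ and $\e_i^\intercal L\e_i=d_i$; your expansion $\left\|\x_{fi}-\frac{c}{n}\allone\right\|_2^2=c^2\frac{n-1}{n}-c^2d_i\,t_f+\Oc(t_f^2)$ is exactly the square of the paper's expansion of the norm, so the two conclusions agree. What each approach buys: yours is shorter and more elementary (no spectral theorem, no L'H\^opital, and --- as you correctly note --- no connectivity hypothesis, since \eqref{eqn:xf_i} and the Taylor expansion are valid for any undirected graph; the paper uses connectivity only to set up the eigendecomposition), while the paper's spectral route is reusable infrastructure: the identity \eqref{eqn:dist_xfi_con} derived there is precisely the starting point of the $t_f\gg 0$ theorem, where one genuinely needs $\lambda_2>0$. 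Your closing remarks on uniformity of the $\Oc(t_f^2)$ remainder over the finitely many pairs of nodes and on the harmlessness of degree ties are the right bookkeeping, and are in fact more explicit than in the paper, which leaves those points implicit.
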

\begin{proof}
    Since $0$ is an eigenvalue of $L$ and its algebraic (and hence, geometric)  multiplicity is $1$ for a connected graph~\cite{von2007tutorial} and $L$ is positive semidefinite, we can diagonalize $L$ as
\begin{align*}
    L=\begin{bmatrix}
        \bu_1 & \cdots & \bu_n
    \end{bmatrix}
    \begin{bmatrix}
        0 & & & \\ & \lambda_2 & & \\ & & \ddots & \\ & & & \lambda_n
    \end{bmatrix}
    \begin{bmatrix}
        \bu_1^\intercal \\ \vdots \\ \bu_n^\intercal
    \end{bmatrix},
\end{align*}
where $0<\lambda_2\leq\cdots\leq\lambda_n$ are eigenvalues of $L$, and ${\bu_1=\frac{1}{\sqrt{n}}\allone},\ldots,\bu_n$ are its orthonormal eigenvectors. Therefore,
\begin{align*}
    e^{-L\tau}&=\begin{bmatrix}
        \bu_1 & \cdots & \bu_n
    \end{bmatrix}
    \begin{bmatrix}
        1 & & & \\ & e^{-\lambda_2\tau} & & \\ & & \ddots & \\ & & & e^{-\lambda_n\tau}
    \end{bmatrix}
    \begin{bmatrix}
        \bu_1^\intercal \\ \vdots \\ \bu_n^\intercal
    \end{bmatrix}\\
    &=\frac{1}{n}\allone\allone^\intercal+e^{-\lambda_2\tau}\bu_2\bu_2^\intercal+\cdots+e^{-\lambda_n\tau}\bu_n\bu_n^\intercal.
\end{align*}

Using \eqref{eqn:xf_i}, we have for any $t_f>0$,
\begin{align}
\label{eqn:dist_xfi_con}
    &\x_{fi}-\frac{c}{n}\allone
    =c\left(\frac{1}{t_f}\int_{0}^{t_f} e^{-L\tau}\e_i \,d\tau-\frac{1}{n}\allone\allone^\intercal\e_i\right)\cr
    =&\frac{c}{t_f}\int_0^{t_f}\left(e^{-L\tau}-\frac{1}{n}\allone\allone^\intercal\right) \,d\tau \e_i\cr
    =&\frac{c}{t_f}\int_0^{t_f}\left(e^{-\lambda_2\tau}\bu_2\bu_2^\intercal+\cdots+e^{-\lambda_n\tau}\bu_n\bu_n^\intercal\right) \,d\tau \e_i\cr
    =&\frac{c}{t_f}\left(\frac{1-e^{-\lambda_2t_f}}{\lambda_2}\bu_2\bu_2^\intercal+\cdots+\frac{1-e^{-\lambda_nt_f}}{\lambda_n}\bu_n\bu_n^\intercal\right)\e_i.\nonumber\\
\end{align}

\xin{
Note that \eqref{eqn:dist_xfi_con} is a function of $t_f$.
If we take the limit $t_f\rightarrow0$, by L'Hôpital's rule, we have for $j\in\{2,\ldots,n\}$,
\begin{align*}
    \lim_{t_f\rightarrow0}\frac{1-e^{-\lambda_jt_f}}{\lambda_j t_f}
    =\lim_{t_f\rightarrow0}\frac{\lambda_j e^{-\lambda_jt_f}}{\lambda_j}
    =1.
\end{align*}
Thus, the limit of \eqref{eqn:dist_xfi_con} is
\begin{align*}
    \lim_{t_f\rightarrow0}\x_{fi}-\frac{c}{n}\allone&=c\left(\bu_2\bu_2^\intercal+\cdots+\bu_n\bu_n^\intercal\right)\e_i\\
    =&c\left(I-\frac{1}{n}\allone\allone^\intercal\right)\e_i
    =c\left(\e_i-\frac{1}{n}\allone\right).
\end{align*}
\beh{Since the norm of such a  limit is independent of agent index, we need to investigate higher order terms. Therefore, consider the difference between~\eqref{eqn:dist_xfi_con} and its limit}
\begin{align*}
    &\x_{fi}-\frac{c}{n}\allone-c\left(\e_i-\frac{1}{n}\allone\right)\\
    =&c\!\left(\!\frac{1\!-\!e^{-\lambda_2t_f}\!-\!\lambda_2 t_f}{\lambda_2 t_f}\bu_2\bu_2^\intercal\!+\!\cdots\!+\!\frac{1\!-\!e^{-\lambda_nt_f}\!-\!\lambda_n t_f}{\lambda_n t_f}\bu_n\bu_n^\intercal\!\right)\!\e_i.
\end{align*}
Again, using L'Hôpital's rule, we have for $j\in\{2,\ldots,n\}$,
\begin{align*}
    &\lim_{t_f\rightarrow0}\frac{1-e^{-\lambda_jt_f}-\lambda_j t_f}{\lambda_j t_f^2}
    =\lim_{t_f\rightarrow0}\frac{\lambda_j e^{-\lambda_jt_f}-\lambda_j}{2\lambda_j t_f}\\
    =&\lim_{t_f\rightarrow0}\frac{-\lambda_j^2 e^{-\lambda_jt_f}}{2\lambda_j}
    =-\frac{\lambda_j}{2}.
\end{align*}
Thus, we have
\begin{align*}
    &\lim_{t_f\rightarrow0}\frac{1}{t_f}\left(\x_{fi}-\frac{c}{n}\allone-c\left(\e_i-\frac{1}{n}\allone\right)\right)\\
    =&-c\left(\frac{\lambda_2}{2}\bu_2\bu_2^\intercal+\cdots+\frac{\lambda_n}{2}\bu_n\bu_n^\intercal\right)\e_i
    =-\frac{c}{2}L\e_i.
\end{align*}
By continuity of norm function, we have
\begin{align*}
    \lim_{t_f\rightarrow0}\left\|\x_{fi}-\frac{c}{n}\allone\right\|_2
    =c\left\|\e_i-\frac{1}{n}\allone\right\|_2
    =c\sqrt{\frac{n-1}{n}},
\end{align*}
and thus
\begin{align}
\label{eqn:lim_dist_tf_small}
    &\lim_{t_f\rightarrow0}\frac{1}{t_f}\left(\left\|\x_{fi}-\frac{c}{n}\allone\right\|_2-c\sqrt{\frac{n-1}{n}}\right)\cr
    =&\lim_{t_f\rightarrow0}\left\|\frac{c}{t_f}\left(\e_i-\frac{1}{n}\allone\right)-\frac{c}{2}L\e_i\right\|_2-\frac{c}{t_f}\sqrt{\frac{n-1}{n}}\cr
    =&\lim_{t_f\rightarrow0}\frac{c}{t_f}\!\left(\!\frac{\sqrt{\!\left(\!n\!-\!\frac{1}{2}nt_fd_i\!-\!1\!\right)^2\!+d_i\!\left(\!\frac{1}{2}nt_f\!-\!1\!\right)^2\!+\!n\!-\!1\!-\!d_i}}{n}\right.\cr
    &\left.-\sqrt{\frac{n-1}{n}}\right)\cr
    \stackrel{\rm{(a)}}{=}&\lim_{t_f\rightarrow0}\frac{c}{2n}\frac{-n^2d_i+\frac{1}{2}n^2d_i(d_i+1)t_f}{\sqrt{\!\left(\!n\!-\!\frac{1}{2}nt_fd_i\!-\!1\!\right)^2\!+d_i\!\left(\!\frac{1}{2}nt_f\!-\!1\!\right)^2\!+\!n\!-\!1\!-\!d_i}}\cr
    =&-\frac{c}{2}\sqrt{\frac{n}{n-1}}d_i,
\end{align}
where L'Hôpital's rule is used again in (a).
From \eqref{eqn:lim_dist_tf_small} we know that when $t_f$ increases slightly from $0$, the node with higher degree $d_i$ has distance $\left\|\x_{fi}-\frac{c}{n}\allone\right\|_2$ that decreases faster from $c\sqrt{\frac{n-1}{n}}$. Thus, when $t_f$ is sufficiently small, the node with higher degree $d_i$ has smaller distance $\left\|\x_{fi}-\frac{c}{n}\allone\right\|_2$, i.e., the \ucenter measure coincides with degree centrality.
}

\end{proof}

\begin{remark}
    When $t_f\rightarrow0$, ${\x_{fi}\rightarrow c\e_i}$, and when $t_f\approx0$, ${\left\|\x_{fi}-\frac{c}{n}\allone\right\|_2}$ varies only slightly with $d_i$. This implies that for short-term influence, if node $i$ is controlled, the minimum energy control steers node $i$'s state $x_i$ from $0$ to approximately $c$, and the states of other nodes change slightly, approximately remaining $0$. This makes sense because the agents do not have enough time to interact. The nodes with higher degree $d_i$ can interact with more neighbors in short time, so $\x_{fi}$ can be slightly closer to consensus.
\end{remark}

Next, we are going to interpret \ucenter for the long-term influence. Before presenting our second result, we first introduce a centrality measure, not previously found in the literature, which relies only on the structure of the graph.

\begin{definition}(Laplacian Inverse Centrality)
\label{def:L_inv_centrality}
    Given an undirected graph $\G$ with Laplacian matrix $L$, let $L^\dagger$ be the Moore–Penrose inverse \cite{penrose1955generalized} of $L$. Then we define the Laplacian inverse centrality of node $i$ to be $\|L^\dagger\e_i\|_2$, i.e., the Euclidean norm of the $i$th column of $L^\dagger$. We say that node $i$ is the central node with respect to Laplacian inverse centrality if ${i\in \argmin_{j\in[n]}\|L^\dagger\e_j\|_2}$.
\end{definition}

As we will discuss later, $L^\dagger$ has been well studied in terms of resistance distance and current-flow closeness centrality \cite{gutman2004generalized,bozzo2013resistance}. Unlike previous work, we will show another interpretation of $L^\dagger$ for tree graphs.

\begin{theorem}
    For a connected undirected graph and for long-term influence $t_f\gg0$, the \ucenter measure as defined in Definition \ref{def:Vc} coincides with the Laplacian inverse centrality defined in Definition \ref{def:L_inv_centrality}. Consequently, for sufficiently large $t_f$, the central node with respect to \ucenter is the central node with respect to Laplacian inverse centrality.
\end{theorem}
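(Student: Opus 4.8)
The plan is to reuse the spectral expression already derived for the short-time analysis and simply track the long-time asymptotics of each modal coefficient. From~\eqref{eqn:dist_xfi_con} we have, for every $t_f>0$,
\[
\x_{fi}-\frac{c}{n}\allone=\frac{c}{t_f}\sum_{j=2}^{n}\frac{1-e^{-\lambda_j t_f}}{\lambda_j}\bu_j\bu_j^\intercal\e_i.
\]
The key observation is that the Moore--Penrose inverse admits the spectral representation $L^\dagger=\sum_{j=2}^{n}\frac{1}{\lambda_j}\bu_j\bu_j^\intercal$, since $L=\sum_{j=2}^{n}\lambda_j\bu_j\bu_j^\intercal$ annihilates the one-dimensional null space spanned by $\bu_1=\frac{1}{\sqrt n}\allone$. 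Thus the only difference between the bracketed sum above and $L^\dagger\e_i$ is the factor $(1-e^{-\lambda_j t_f})$ multiplying each mode.

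First I would multiply through by $t_f$ and let $t_f\to\infty$. Because the graph is connected, $\lambda_j>0$ for all $j\geq2$, so $e^{-\lambda_j t_f}\to0$ and $\frac{1-e^{-\lambda_j t_f}}{\lambda_j}\to\frac{1}{\lambda_j}$ for each $j$. Since the sum is finite, this yields
\[
\lim_{t_f\to\infty}t_f\left(\x_{fi}-\frac{c}{n}\allone\right)=c\sum_{j=2}^{n}\frac{1}{\lambda_j}\bu_j\bu_j^\intercal\e_i=cL^\dagger\e_i.
\]
By continuity of the Euclidean norm, $\lim_{t_f\to\infty}t_f\|\x_{fi}-\frac{c}{n}\allone\|_2=c\|L^\dagger\e_i\|_2$; equivalently $\|\x_{fi}-\frac{c}{n}\allone\|_2=\frac{c}{t_f}\|L^\dagger\e_i\|_2+o(1/t_f)$, so each \ucenter coordinate decays like $1/t_f$ with leading coefficient proportional to the Laplacian inverse centrality.

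The final step is the ranking argument. Scaling every entry of the \ucenter vector by the common positive factor $t_f$ does not alter its $\argmin$, so the $U$-central node minimizes $t_f\|\x_{fj}-\frac{c}{n}\allone\|_2$, a quantity converging to $c\|L^\dagger\e_j\|_2$ as $t_f\to\infty$. Because there are finitely many nodes, coordinatewise convergence forces the ordering among values with distinct limits to stabilize: for any node $j$ whose limit $c\|L^\dagger\e_j\|_2$ strictly exceeds the minimal limit, there is a threshold beyond which $t_f\|\x_{fj}-\frac{c}{n}\allone\|_2$ exceeds the corresponding quantity at a Laplacian-inverse-central node. Hence for sufficiently large $t_f$ the minimizer of the \ucenter distance lies among the minimizers of $\|L^\dagger\e_j\|_2$, which is the claim.

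The main obstacle I anticipate is not the limit computation---that is essentially a one-line spectral argument---but making the passage from ``the rescaled distances converge coordinatewise'' to ``the $\argmin$ coincides for large $t_f$'' fully rigorous, particularly when $\|L^\dagger\e_j\|_2$ attains its minimum at several nodes. In the tie case one only obtains that the $U$-central node belongs to the set of Laplacian-inverse-central nodes rather than equality of the entire ranking, so the conclusion should be stated at that level of generality (or under the assumption of a unique minimizer).
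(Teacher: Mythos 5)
Your proposal is correct and takes essentially the same route as the paper: both start from the spectral expansion in \eqref{eqn:dist_xfi_con}, identify the leading $\tfrac{c}{t_f}$ term with $cL^\dagger\e_i$ (the paper verifies the four Moore--Penrose conditions for the sum $\sum_{j\geq 2}\tfrac{1}{\lambda_j}\bu_j\bu_j^\intercal$, while you invoke the standard spectral characterization of the pseudoinverse), and conclude by letting $t_f\to\infty$. If anything, your explicit handling of the $\argmin$ stabilization and of possible ties among minimizers of $\|L^\dagger\e_j\|_2$ is more careful than the paper's one-line conclusion, whose only extra content is the sharper exponential error term $\Oc\bigl(\tfrac{1}{t_f e^{\lambda_2 t_f}}\bigr)$ in place of your $o(1/t_f)$, which your argument does not need.
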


\begin{proof}
    If we take the limit $t_f\rightarrow\infty$ in \eqref{eqn:dist_xfi_con}, we have
    \begin{align*}
        \x_{fi}-\frac{c}{n}\allone
        =\frac{c}{t_f}\!\left(\!\frac{1}{\lambda_2}\bu_2\bu_2^\intercal+\cdots+\frac{1}{\lambda_n}\bu_n\bu_n^\intercal\!\right)\!\e_i+\s_i,
    \end{align*}
    where $\s_i=\begin{bmatrix} s_{i1}&\cdots&s_{in} \end{bmatrix}^\intercal$ with ${s_{ij}=\Oc\!\left(\!\frac{1}{t_f e^{\lambda_2t_f}}\!\right)\!}$ for ${j\in [n]}$.
    
    Let $L^\ddagger=\frac{1}{\lambda_2}\bu_2\bu_2^\intercal+\cdots+\frac{1}{\lambda_n}\bu_n\bu_n^\intercal$. Note that $L^\ddagger$ is symmetric and positive semidefinite, and $L^\ddagger\allone=L^{\ddagger\intercal} \allone=\allzero$. In addition, we have 
    \begin{align*}
        LL^\ddagger=L^\ddagger L
        =\bu_2\bu_2^\intercal+\cdots+\bu_n\bu_n^\intercal
        \!=\!I-\bu_1\bu_1^\intercal
        \!=\!\!I-\!\frac{1}{n}\allone\allone^\intercal\!\!.
    \end{align*}
    Therefore, we can verify that $L^\ddagger$ satisfies all the Moore–Penrose conditions: $LL^\ddagger L=L$, $L^\ddagger LL^\ddagger=L^\ddagger$, $(LL^\ddagger)^\intercal=LL^\ddagger$, and $(L^\ddagger L)^\intercal=L^\ddagger L$. Thus, $L^\ddagger=L^\dagger$, i.e., $L^\ddagger$ is the Moore–Penrose inverse of $L$, and when $t_f\rightarrow\infty$,
    \begin{align*}
        \left\|\x_{fi}-\frac{c}{n}\allone\right\|_2
        =\frac{c}{t_f}\|L^\dagger\e_i\|_2+\Oc\left(\frac{1}{t_f e^{\lambda_2t_f}}\right).
    \end{align*}
    As a result, when $t_f$ is sufficiently large, \ucenter coincides with Laplacian inverse centrality.
\end{proof}

\begin{remark}
\label{rem:Linv}
    First, it should be noted that $\left\|\x_{fi}-\frac{c}{n}\allone\right\|_2$ decreases at the rate of $\Oc\left(\frac{1}{t_f}\right)$ as $t_f$ increases and when $t_f\rightarrow\infty$, $\x_{fi}\rightarrow\frac{c}{n}\allone$. This makes sense as the larger $t_f$ is, the more time agents have to interact through the network so that $\x_{fi}$ can be closer to the consensus $\frac{c}{n}\allone$.
    Second, it is more interesting to connect $\|L^\dagger\e_i\|_2$ with node $i$'s position in the graph. In the existing literature, $L^\dagger$ is particularly useful to define the resistance distance between nodes in a network \cite{bozzo2013resistance}. Consider a network in which the edges represent resistors and the nodes are junctions between resistors. Each edge has conductance $1$. Given nodes $i$ and $j$, the resistance distance $R_{ij}$ between nodes $i$ and $j$ is defined as the effective resistance between nodes $i$ and $j$, which is equal to the potential difference between nodes $i$ and $j$ when a current source of 1~Amp is placed between $i$ (input) and $j$ (output). In this case, it has been derived in \cite{bozzo2013resistance} that
    \begin{align}
    \label{eqn:resist_dist}
        R_{ij}=L^\dagger_{ii}+L^\dagger_{jj}-2L^\dagger_{ij}.
    \end{align}
    Therefore, the mean resistance of node $i$ from the other nodes would be
    \begin{align}
    \label{eqn:mean_resist_dist}
        \frac{1}{n}\sum_{j=1}^{n}R_{ij}=L^\dagger_{ii}+\frac{\trace(L^\dagger)}{n}.
    \end{align}
    Therefore, the diagonal element $L^\dagger_{ii}$, which is always nonnegative as $L^\dagger$ is positive semidefinite, is large if the mean resistance distance of node $i$ from the rest of the graph is high. From \eqref{eqn:resist_dist} we know the off-diagonal element $L^\dagger_{ij}$ also tells us something about the resistance proximity between nodes $i$ and $j$. $L^\dagger_{ij}$ is high (in particular positive) if the resistance distance between nodes $i$ and $j$ is low, and $L^\dagger_{ij}$ is low (in particular negative) if the resistance distance between nodes $i$ and $j$ is high. The above reasoning gives us a sense that $\|L^\dagger\e_i\|_2$ provides a measure of node $i$'s peripherality.
\end{remark}

Next, we derive a characterization of $\|L^\dagger\e_i\|_2$ for tree graphs in terms of the pairwise node distances on the graph.

    \subsection{\ucenter of Trees for $t_f\gg 0$}
    Recall that a tree is a  connected and cycle-free graph. Given any nodes $i$ and $j$ in a tree graph, there is only one path connecting them, and let us denote the length of this path (i.e., the number of edges on this path) by $d_{ij}$. Thus, for the resistor network we described earlier, the effective resistance between nodes $i$ and $j$ is $d_{ij}$. Let ${D_i:=\sum_{j=1}^n d_{ij}}$, and ${W:=\sum_{i=1}^n D_i}$. Replacing $R_{ij}$ by $d_{ij}$ in \eqref{eqn:resist_dist} and \eqref{eqn:mean_resist_dist}, we have
    \begin{align*}
        d_{ij}&=L^\dagger_{ii}+L^\dagger_{jj}-2L^\dagger_{ij},\\
        \frac{1}{n}D_i&=L^\dagger_{ii}+\frac{1}{n}\sum_{j=1}^n L^\dagger_{jj}.
    \end{align*}
    Solving these two equations, we have
    \begin{align*}
        L^\dagger_{ij}=\frac{1}{2}\left(\frac{D_i+D_j}{n}-d_{ij}-\frac{W}{n^2}\right)
    \end{align*}
    for all $i,j\in[n]$. Thus,
    \begin{align}
    \label{eqn:L_inv_vecnorm}
        &\|L^\dagger\e_i\|_2^2
        =\frac{1}{4}\sum_{j=1}^n\left(\frac{D_i+D_j}{n}-d_{ij}-\frac{W}{n^2}\right)^2\cr
        =&\frac{1}{4}\left[\frac{\sum_{j=1}^n D_j^2}{n^2}
        +\sum_{j=1}^n\left(\frac{D_i}{n}-d_{ij}\right)^2\right.\cr
        &\left.+2\sum_{j=1}^n \frac{D_j}{n}\left(\frac{D_i}{n}-d_{ij}\right)-\frac{W^2}{n^3}\right].
    \end{align}
Notice that the first and the last term in the right-hand side of \eqref{eqn:L_inv_vecnorm} are common for all $i\in [n]$. The second term in \eqref{eqn:L_inv_vecnorm} is the empirical ``variance'' of the pairwise distance between the node $i$ and the other nodes in the graph, i.e., the empirical variance of $d_{i1},\ldots,d_{in}$, and is small if node $i$ has distances to other nodes that are relatively uniform. Interpreting the third term in \eqref{eqn:L_inv_vecnorm} is more complicated. Notice that $\frac{D_i}{n}-d_{ij}$ is large (in particular positive) if node $j$ is close to $i$, and is small (in particular negative) if node $j$ is far from $i$. Thus, if node $i$ is central, for node $j$ that is close to $i$, node $j$ is also central, so $D_j$ is small and $\frac{D_i}{n}-d_{ij}$ is large; for node $j$ that is far from $i$, node $j$ is peripheral, so $D_j$ is large and $\frac{D_i}{n}-d_{ij}$ is small. Therefore, a more central node has a smaller third term. In sum, if node $i$ is more central and other nodes are more "evenly" positioned relative to it, then $\|L^\dagger\e_i\|_2$ is smaller.

\section{Numerical Experiments}

In this section, we present three examples to illustrate our results: a tree network, the Minnesota road network, and a Facebook network (https://snap.stanford.edu/data/egonets-Facebook.html). In these examples, we set the sum of terminal states to $c=1$, and compute $\left\|\x_{fi}-\frac{c}{n}\allone\right\|_2$ for each node $i$ for the short-term influence $t_f=0.01$ or $0.001$ and the long-term influence $t_f=1000$, respectively. We compare our \ucenter measure as defined in Definition \ref{def:Vc} with the following 6 centrality measures.
\begin{itemize}
    \item Degree centrality: The centrality of node $i$ is simply its degree (i.e., the number of nodes connected to $i$). The central nodes are the nodes with the highest degree.
    \item Eigenvector centrality: Let $\A$ be the adjacency matrix for a strongly connected graph. Then, by the Perron-Frobenius Theorem (see Chapter 7 in \cite{meyer2023matrix}), $\A$ has a unique eigenvector $\bv$ with positive entries (up to multiplication). The eigenvector centrality of node $i$ is the $i$th component $v_i$ of $\bv$. The central nodes are the nodes with the highest $v_i$.
    \item Closeness centrality: Given nodes $i$ and $j$, let the length of the shortest path connecting them be $d_{ij}$. The closeness centrality of node $i$ is ${D_i:=\sum_{j=1}^n d_{ij}}$. The central nodes are the nodes with the lowest $D_i$.
    \item Variance centrality: The variance centrality of node $i$ is $\sum_{j=1}^n\left(\frac{D_i}{n}-d_{ij}\right)^2$. The central nodes are the nodes with the lowest variance. This definition of centrality has not been introduced in existing literature. We are inspired to study it because it appears as a term in \eqref{eqn:L_inv_vecnorm}.
    \item Current-flow closeness centrality: Recall the definition of resistance distance $R_{ij}$ between nodes $i$ and $j$ in Remark \ref{rem:Linv}. The current-flow closeness centrality of node $i$ is $R_i=\sum_{j=1}^{n}R_{ij}$. The central nodes are the nodes with the lowest $R_i$. For tree graphs, current-flow closeness centrality is the same as closeness centrality.
    \item Current-flow variance centrality: The current-flow variance centrality of node $i$ is $\sum_{j=1}^n\left(\frac{R_i}{n}-R_{ij}\right)^2$. The central nodes are the nodes with the lowest current-flow variance. For tree graphs, current-flow variance centrality is the same as variance centrality.
\end{itemize}

\begin{figure}[h]
    \centering
    \includegraphics[width=\linewidth]{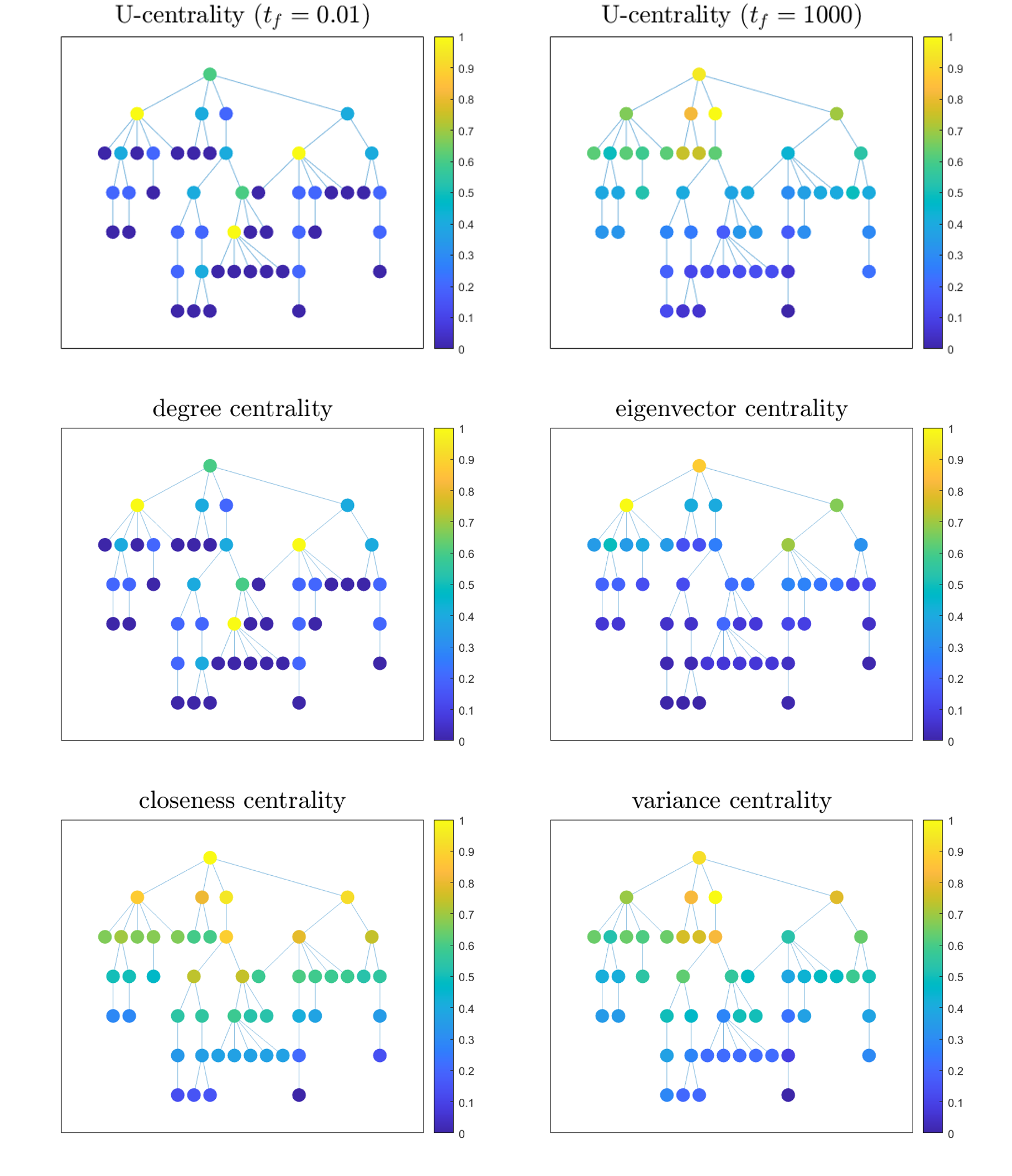}
    \caption{Comparison between \ucenter and other centrality measures for a tree graph.}
    \label{fig:simu_tree}
\end{figure}

\begin{figure}[h]
    \centering
    \includegraphics[width=\linewidth]{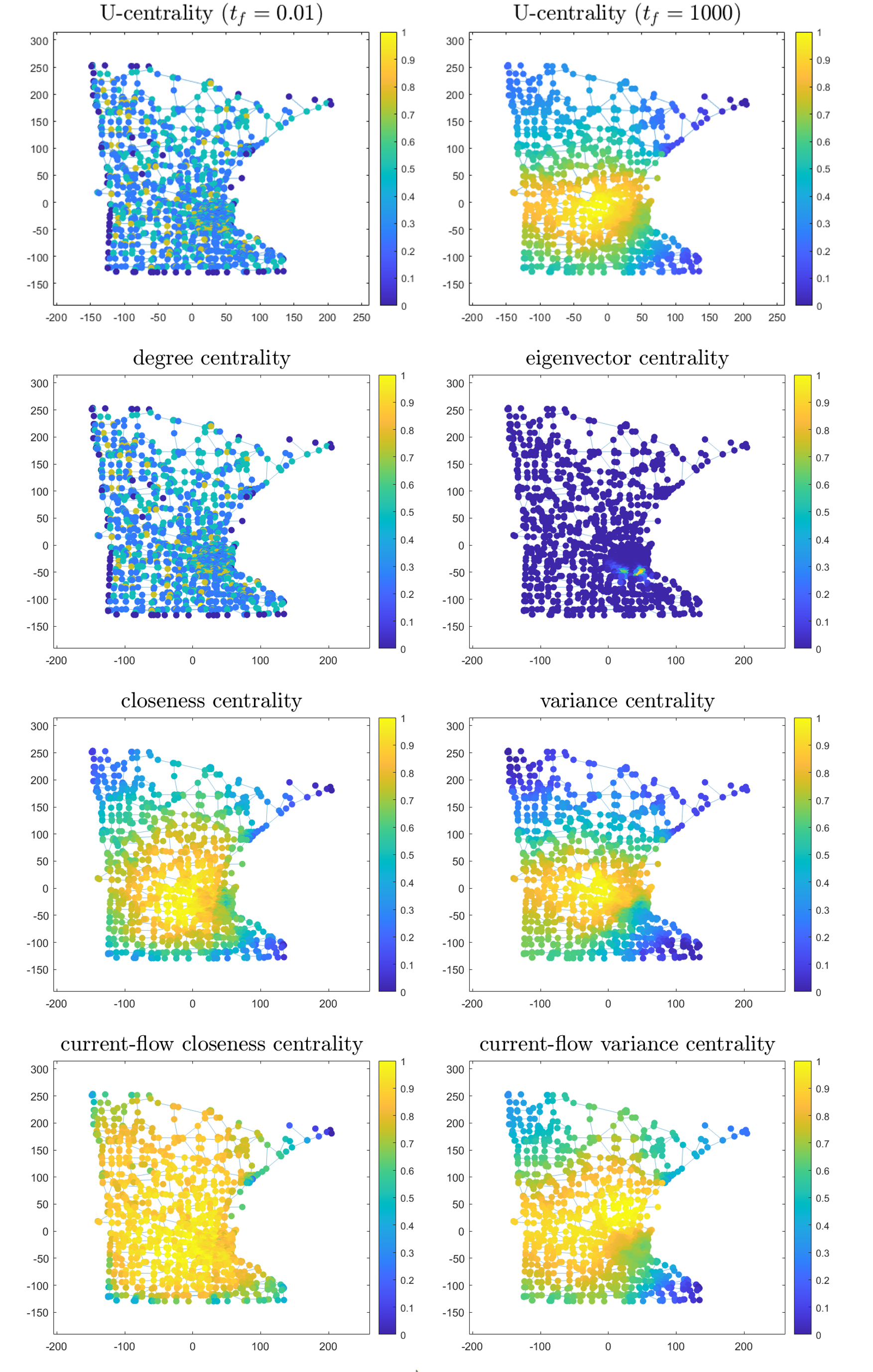}
    \caption{Comparison between \ucenter and other centrality measures for the Minnesota road network.}
    \label{fig:simu_Minnesota}
\end{figure}

\begin{figure}[h]
    \centering
    \includegraphics[width=\linewidth]{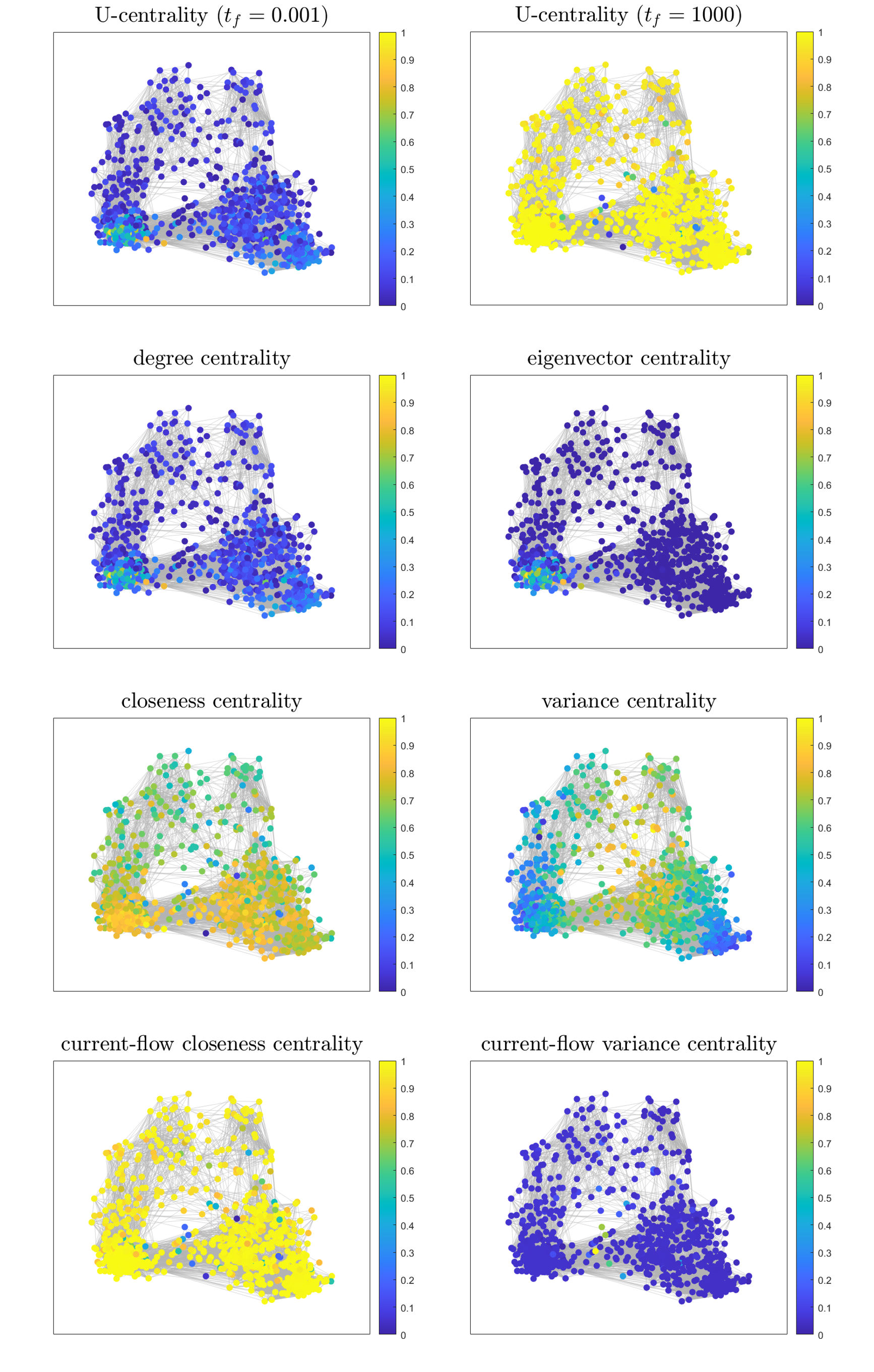}
    \caption{Comparison between \ucenter and other centrality measures for a Facebook network.}
    \label{fig:simu_fb}
\end{figure}

The simulation results for a tree graph with $n=50$ nodes, for the Minnesota road network, and for the largest connected component of a Facebook network are shown in Figure \ref{fig:simu_tree}, \ref{fig:simu_Minnesota}, and \ref{fig:simu_fb} respectively, where warmer colors represent more central nodes, and cooler colors represent less central nodes. In both examples, we observe that when $t_f$ is small, \ucenter coincides with degree centrality. When $t_f$ is large, \ucenter is distinct from all other centrality measures but bears the closest resemblance to variance centrality or current-flow closeness centrality. Generally speaking, when $t_f$ is large, \ucenter reflects the position of the nodes in relation to their peripherality, confirming our interpretation outlined earlier.

To examine how \ucenter behaves across different time scales, we provide a video (https://youtu.be/YY2LluJScts) illustrating the evolution of \ucenter as $t_f$ increases.


\section{Conclusion}

We proposed to view network centrality in the context of optimal control of dynamical system over the underlying network describing the context in which such a centrality notion is used. As a result, the centrality would depend on the \textit{dynamics}, and the \textit{control objective}. To showcase the strength of this framework, we proposed and studied a centrality measure based on minimum energy control (from different nodes) over Laplacian dynamics. We showed that the resulting centrality measure, which depends on the time-scale under investigation, would be closely related to existing centrality measures in network science in two extreme time horizons.

A natural next step in our research is to conduct a theoretical investigation of how \ucenter evolves over time.
Some other potential directions for future investigations include studying non-Laplacian dynamics over networks and adopting different control objectives such as minimum time control. It is also interesting to investigate that if a subset of agents can be controlled, which agents are the best agents to control.



\bibliographystyle{abbrv}
\bibliography{bib}

\begin{thebibliography}{10}

\bibitem{aguilar2014graph}
C.~O. Aguilar and B.~Gharesifard.
\newblock Graph controllability classes for the laplacian leader-follower dynamics.
\newblock {\em IEEE transactions on automatic control}, 60(6):1611--1623, 2014.

\bibitem{ballester2006s}
C.~Ballester, A.~Calv{\'o}-Armengol, and Y.~Zenou.
\newblock Who's who in networks. wanted: The key player.
\newblock {\em Econometrica}, 74(5):1403--1417, 2006.

\bibitem{banerjee2013diffusion}
A.~Banerjee, A.~G. Chandrasekhar, E.~Duflo, and M.~O. Jackson.
\newblock The diffusion of microfinance.
\newblock {\em Science}, 341(6144):1236498, 2013.

\bibitem{boyd2004convex}
S.~Boyd.
\newblock Convex optimization.
\newblock {\em Cambridge UP}, 2004.

\bibitem{bozzo2013resistance}
E.~Bozzo and M.~Franceschet.
\newblock Resistance distance, closeness, and betweenness.
\newblock {\em Social Networks}, 35(3):460--469, 2013.

\bibitem{chapman2014controllability}
A.~Chapman, M.~Nabi-Abdolyousefi, and M.~Mesbahi.
\newblock Controllability and observability of network-of-networks via cartesian products.
\newblock {\em IEEE Transactions on Automatic Control}, 59(10):2668--2679, 2014.

\bibitem{degroot1974reaching}
M.~H. DeGroot.
\newblock Reaching a consensus.
\newblock {\em Journal of the American Statistical association}, 69(345):118--121, 1974.

\bibitem{fraschini2014eeg}
M.~Fraschini, A.~Hillebrand, M.~Demuru, L.~Didaci, and G.~L. Marcialis.
\newblock An eeg-based biometric system using eigenvector centrality in resting state brain networks.
\newblock {\em IEEE Signal Processing Letters}, 22(6):666--670, 2014.

\bibitem{french1956formal}
J.~R. French~Jr.
\newblock A formal theory of social power.
\newblock {\em Psychological review}, 63(3):181, 1956.

\bibitem{gutman2004generalized}
I.~Gutman and W.~Xiao.
\newblock Generalized inverse of the laplacian matrix and some applications.
\newblock {\em Bulletin (Acad{\'e}mie serbe des sciences et des arts. Classe des sciences math{\'e}matiques et naturelles. Sciences math{\'e}matiques)}, pages 15--23, 2004.

\bibitem{hespanha2018linear}
J.~P. Hespanha.
\newblock {\em Linear systems theory}.
\newblock Princeton university press, 2018.

\bibitem{hines2008centrality}
P.~Hines and S.~Blumsack.
\newblock A centrality measure for electrical networks.
\newblock In {\em Proceedings of the 41st Annual Hawaii International Conference on System Sciences (HICSS 2008)}, pages 185--185. IEEE, 2008.

\bibitem{iyer2013attack}
S.~Iyer, T.~Killingback, B.~Sundaram, and Z.~Wang.
\newblock Attack robustness and centrality of complex networks.
\newblock {\em PloS one}, 8(4):e59613, 2013.

\bibitem{joyce2010new}
K.~E. Joyce, P.~J. Laurienti, J.~H. Burdette, and S.~Hayasaka.
\newblock A new measure of centrality for brain networks.
\newblock {\em PloS one}, 5(8):e12200, 2010.

\bibitem{meyer2023matrix}
C.~D. Meyer.
\newblock {\em Matrix analysis and applied linear algebra}.
\newblock SIAM, 2023.

\bibitem{newman2018networks}
M.~Newman.
\newblock {\em Networks}.
\newblock Oxford university press, 2018.

\bibitem{o2016conjecture}
S.~O'Rourke and B.~Touri.
\newblock On a conjecture of godsil concerning controllable random graphs.
\newblock {\em SIAM Journal on Control and Optimization}, 54(6):3347--3378, 2016.

\bibitem{pasqualetti2014controllability}
F.~Pasqualetti, S.~Zampieri, and F.~Bullo.
\newblock Controllability metrics, limitations and algorithms for complex networks.
\newblock {\em IEEE Transactions on Control of Network Systems}, 1(1):40--52, 2014.

\bibitem{penrose1955generalized}
R.~Penrose.
\newblock A generalized inverse for matrices.
\newblock In {\em Mathematical proceedings of the Cambridge philosophical society}, volume~51, pages 406--413. Cambridge University Press, 1955.

\bibitem{rahmani2009controllability}
A.~Rahmani, M.~Ji, M.~Mesbahi, and M.~Egerstedt.
\newblock Controllability of multi-agent systems from a graph-theoretic perspective.
\newblock {\em SIAM Journal on Control and Optimization}, 48(1):162--186, 2009.

\bibitem{rodrigues2019network}
F.~A. Rodrigues.
\newblock Network centrality: an introduction.
\newblock {\em A mathematical modeling approach from nonlinear dynamics to complex systems}, pages 177--196, 2019.

\bibitem{tanner2004controllability}
H.~G. Tanner.
\newblock On the controllability of nearest neighbor interconnections.
\newblock In {\em 2004 43rd IEEE conference on decision and control (CDC)(IEEE Cat. No. 04CH37601)}, volume~3, pages 2467--2472. IEEE, 2004.

\bibitem{von2007tutorial}
U.~Von~Luxburg.
\newblock A tutorial on spectral clustering.
\newblock {\em Statistics and computing}, 17:395--416, 2007.

\bibitem{zhao2019networks}
S.~Zhao and F.~Pasqualetti.
\newblock Networks with diagonal controllability gramian: Analysis, graphical conditions, and design algorithms.
\newblock {\em Automatica}, 102:10--18, 2019.

\end{thebibliography}

\end{document}